\documentclass[conference,letterpaper]{IEEEtran}
\IEEEoverridecommandlockouts
\usepackage{cite}
\usepackage{amsmath,amssymb,amsfonts}
\usepackage{amsthm}
\newtheorem{remark}{Remark}

\newtheorem{theorem}{Theorem}
\usepackage{algorithmic}
\usepackage{graphicx}
\usepackage{textcomp}
\usepackage{xcolor}
\usepackage{mathrsfs}
\usepackage{balance}
\def\BibTeX{{\rm B\kern-.05em{\sc i\kern-.025em b}\kern-.08em
    T\kern-.1667em\lower.7ex\hbox{E}\kern-.125emX}}
\usepackage{amssymb}

\usepackage{tikz}

\newcommand{\circbar}{
  \mathrel{
    \tikz[baseline=-0.5ex]{
      \draw (0,0) circle (0.4ex);
      \draw (-1ex,0) -- (1ex,0);
    }
  }
}

\allowdisplaybreaks

\begin{document}

\title{Perfectly Private Over-the-Air Computation
%\\ for Federated Learning
\thanks{This work was funded in part by the Swedish Foundation for Strategic Research (FUS21-0026) and in part by the Swedish Innovation Agency (Vinnova) through the SweWIN Center (2023-00572).}
}
\author{\IEEEauthorblockN{Shudi Weng, Ming Xiao, Mikael Skoglund}
\IEEEauthorblockA{\textit{Department of Information Science and Engineering} \\
\textit{KTH Royal Institute of Technology}\\
Stockholm, Sweden \\
\{shudiw, mingx, skoglund\}@kth.se
}
}

\maketitle
\begin{abstract}
This paper studies a key research question: how to achieve perfect privacy in over-the-air computation (AirComp)? The problem is particularly intriguing due to a dilemma. Real-field operations can ensure invertibility but generally introduce statistical dependence, resulting in inevitable privacy leakage. In contrast, modulo operations can decorrelate the output from the original message, but suffer from the ill-posed invertibility when applied 
over non-prime groups (e.g., the real field).
This raises a subtle yet fundamental question: Does perfect privacy intrinsically conflict with AirComp?
We show that the answer is no.
By carefully leveraging the interplay between real-field and modulo operations, perfect privacy and accurate computation can, in fact, be achieved simultaneously, enabling perfectly private aggregation.
\end{abstract}

\begin{IEEEkeywords}
Over-the-air Computation, Secure aggregation, Torus, Perfect privacy.
\end{IEEEkeywords}

%a fundamental limitation: it induces a many-to-one mapping from the real field onto a compact domain (the torus), resulting in a loss of invertibility. 

%%%%%%%%%%------------------------------------
%\vspace{-0.5em}
\section{Introduction}
Over-the-air computation (AirComp) refers to the computation of mathematical functions by exploiting the waveform superposition property of multiple-access channels (MACs) \cite{10092857}. In contrast to the conventional communication-computation separation paradigm, 
where signals are transmitted orthogonally and computation is performed after decoding, AirComp enables simultaneous transmissions and direct function computation through in-air signal superposition. By leveraging interference rather than avoiding it, AirComp significantly improves communication efficiency and reduces bandwidth requirements. 
This paradigm has emerged as a key enabler for efficient large-scale distributed learning systems, where frequent aggregation of model updates is required \cite{zheng2025secure}.

Despite its advantages, AirComp raises critical privacy concerns, as the transmitted signal contains information about user data. 
%Ensuring privacy in AirComp is therefore essential for deploying trustworthy wireless learning systems. %\cite{huang2021evaluating,xue2025dictpflefficientprivatefederated}.
Existing approaches to private AirComp primarily rely on jamming, cryptographic techniques, and noise injection. %\cite{10264827,nordlund2025secureovertheaircomputationmultiple,feng2025over,weng2025codingenforcedrobustsecureaggregation,9929413,10161677}. 
Jamming-based approaches \cite{frey2021towards,yan2025secure,10681103} aim to obscure transmitted signals by introducing artificial interference. 
However, their effectiveness depends heavily on channel conditions and requires additional transmission resources, reducing overall system efficiency.
Cryptographic methods, such as homomorphic encryption (HE) \cite{wang2025efficient}, enable users to encrypt their messages prior to transmission and perform aggregation directly on ciphertexts, providing strong privacy guarantees. Nevertheless, the associated encryption and decryption procedures can incur substantial computational overhead on clients.
In contrast, noise injection methods apply additive masks to original messages before transmission, requiring only simple arithmetic operations and thus imposing minimal computational demands while providing a certain level of privacy protection. 
To this end, early approaches consider independently generated noise at each client \cite{10264827}. However, such noise is fully preserved and accumulated at the receiver, resulting in significant degradation in aggregation accuracy. 
To mitigate this issue, subsequent works introduce correlated noise designs across clients \cite{nordlund2025secureovertheaircomputationmultiple}, where the injected noise is structured to partially cancel out during in-air aggregation, thereby reducing the residual privacy noise.
However, both approaches remain constrained by the inherent privacy-utility trade-off. Even with privacy-accuracy optimization techniques \cite{feng2025over}, accuracy loss can be considerable to gain strong privacy. 
Building upon this idea, more recent approaches exploit zero-sum noise constructions \cite{weng2025codingenforcedrobustsecureaggregation,9929413}, ensuring that the aggregate noise vanishes exactly, thereby removing the entire privacy noise and preserving the in-air aggregation accuracy \cite{9929413,10161677}. 
%This significantly alleviates the privacy–utility tradeoff and makes zero-sum noise a promising direction for private AirComp. 
Beyond these, zero-sum noise has been demonstrated to be a promising approach for tackling various real-world challenges, e.g., user collusion, dropouts, and heterogeneous security requirements. \cite{zhao2022information,zhang2026fundamentallimitshierarchicalsecure,li2025weakly}.

Yet, existing noise injection-based private AirComp protocols \cite{10264827,nordlund2025secureovertheaircomputationmultiple,feng2025over,9929413,10161677} all suffer from nonzero privacy leakage, as real-field operations (e.g., addition and multiplication) introduce statistical dependence between inputs and outputs. 
While perfect privacy can, in principle, be asymptotically approached by increasing the power of privacy noise, this strategy is impractical in real-world AirComp systems. In particular, the large-magnitude transmit signals induced by high privacy noise power are susceptible to nonlinear distortion and clipping in power amplifiers, which could severely degrade aggregation accuracy.

These practical limitations raise a fundamental question: Is it possible to achieve perfect privacy in AirComp under realistic conditions (e.g., in the presence of channel fading, noise, and power constraints)?
A potential alternative is to employ modulo operations, which can decorrelate transmitted signals from the underlying data by introducing uniform noise. 
Despite this distinct advantage, the approach has an intrinsic limitation. 
In wireless transmission, the channel operates in the analog domain regardless of the transmitted signal representation, whereas modulo operations map real-valued signals onto a compact domain (the torus), resulting in a many-to-one and inherently non-invertible transformation.
As a result, the unique recovery of the original real-valued signals from their modulo representations is only possible under a restrictive condition, e.g., the unimodularity of the associated linear transformation.
%This reveals a fundamental conflict: real-field operations enable accurate computation but leak information; in contrast, modulo operations ensure privacy but render the inverse problem from the torus to the real field ill-posed. 
In AirComp, however, the channel coefficients are random, and rarely satisfy the unique recovery condition, potentially causing ambiguity in signal reconstruction and undermining the computation correctness. \footnote{To avoid any confusion, the signal recovery ambiguity discussed here is caused solely by the non-prime group size and is unrelated to the input range.}

%while modulo operations enable privacy by removing statistical dependence, they generally render the inverse problem from the torus to the real field ill-posed.
%the resulting inverse problem from the torus to the real field is generally ill-posed. Due to random channel realizations, this condition is rarely satisfied in practical AirComp systems, potentially resulting in ambiguity in signal recovery and undermining the correctness of AirComp . 

%The unique recovery of real-field signal is only possible under restrictive conditions, i.e., unimodularity of the linear transformation matrix of the torus coefficients. 

%%%%%%%%%%------------------------------------
This work aims to advance the frontier of private AirComp by introducing a cross-domain design that integrates real-field operations with modulo masking, thereby overcoming the utility-privacy trade-off under a finite power budget. To the best of our knowledge, this is the first work to achieve perfect physical-layer privacy for AirComp under realistic channel conditions without relying on cryptographic techniques\footnote{Despite \cite{brune2023private} achieves perfect privacy for AirComp through a lattice-based approach,
it
relies on a simplified channel model that neglects key practical challenges (e.g., fading), and thus falls outside the scope of this work.}.    
The proposed method supports both discrete and analog computation.
%by developing a perfectly private analog aggregation protocol without sacrificing accuracy under a finite power budget. 
Our main contributions are summarized below.
\begin{itemize}
    %\item We propose P$^2$-AirComp, a perfectly private AirComp protocol that enables accurate computation. 
    %P$^2$-AirComp combines and alternates the cross-domain operations to match the operational properties of real field and torus. 
        %and to exploit their complementary advantages.
        %by leveraging the interplay between real-field and modulo operations. 
        %In particular, t
        %This is achieved by carefully combining and alternating the cross-domain operations to match their operational properties and to exploit their complementary advantages. 
    %The real-field operations preserve a unique mapping during operations, enabling accurate aggregation and handling the channel effects, while the modulo operation ensures perfect user privacy through decorrelation. 
    %As a result, both their complementary advantages are preserved.
    \item We propose a private AirComp protocol, P$^2$-AirComp, that achieves both perfect privacy and accurate analog computation. 
    P$^2$-AirComp combines and alternates cross-domain operations to align the properties of the real-field and the modulo operations, thereby preserving their complementary advantages simultaneously. Theoretical analysis and simulation results show that the proposed method can achieve simultaneous accurate computation and perfect privacy.
    %both accurate computation and perfect privacy can be achieved simultaneously within a finite power budget.
    \item We show that the proposed P$^2$-AirComp achieves perfect information-theoretic security both in the presence of correlated secret keys at the clients and in the absence of any side information at the server.
    \item We analyze the mean square error (MSE) distortion of P$^2$-AirComp in terms of the effective noise, accounting for the impact of the modulo operation. We further derive a closed-form expression for the pointwise MSE, along with explicit upper and lower bounds.
    %Our analysis differs from existing ones, which are purely rooted in real-field operations.
    %\textcolor{red}{no-wrapping event with high probability?}
\end{itemize}

%%%%%%%%%%------------------------------------
\section{System Model}
%%%%%%%%%%------------------------------------
\subsection{AirComp Protocol} \label{sec: AirComp model}
%Our AirComp model refers to \cite{9929413}. 
Each client \(k\) holds a message vector \(\mathbf{e}_k \in \mathbb{R}^D\) to be sent to the server. Assume that the entries in $\mathbf{e}_k$ are identically and independently distributed (i.i.d.) under a total power constraint $\mathbb{E}\!\left[\lVert \mathbf{e}_k \rVert^2\right]\leq P_E$.
Let \(h_k \in \mathbb{R}\) denote the uplink channel coefficient from client \(k\) to the server at round \(t\), and let \(\mathbf{x}_k \in \mathbb{R}^D\) denote the transmitted signal. 
Assume perfect channel state information at the transmitter (CSIT), using channel inversion precoding, client \(k\) transmits
\begin{align}
    \mathbf{x}_k = \frac{\sqrt{P}}{h_k}\mathbf{e}_k,
    \qquad h_k \neq 0,
    \label{eq: tx_signal}¢
\end{align}
where \(P>0\) is a common power scaling factor and $\alpha_{k}\triangleq\frac{\sqrt{P}}{h_k}$ is the transmit scaling factor. 

Following this, the received signal at the server is
\begin{align}
    \mathbf{y}
    &= \sum_{k=1}^K h_k \mathbf{x}_k + \mathbf{z} \label{eq: y_signal1}\\
    &= \sqrt{P}\sum_{k=1}^K \mathbf{e}_k + \mathbf{z},
    \label{eq: y_signal2}
\end{align}
where the noise vector \(\mathbf{z} \in \mathbb{R}^D\) is independent of the transmit signal and follows $ \mathbf{z} \sim \mathcal{N}(\mathbf{0}, N_0 \mathbf{I}_D)$.
Moreover, the transmit power of client \(k\) is assumed to satisfy a total power constraint, $\mathbb{E}\!\left[\|\mathbf{x}_k\|^2\right]
    = \frac{P}{|h_k|^2}\mathbb{E}\!\left[\|\mathbf{e}_k\|^2\right]
    \le P_X$.
%%%%%%%%%-----------------------------------------------------------------------------------
\subsection{Threat Model}
Our threat model follows a standard assumption adopted in multi-party computation (MPC). The server and clients are assumed to be honest but curious: they faithfully execute the protocol but may attempt to infer other private messages from publicly observable information and their local knowledge. 
Specifically, a client may try to learn the private message by exploiting the broadcasts from other clients and its own local secret key. 
The server may attempt to infer any private message from its observations.

%%%%----------------------------------------------------------------------

%%%%%%%%%%------------------------------------
\section{The Proposed Method: P$^2$-AirComp}\label{sec: PP_OTA_FL} 
This section presents the proposed protocol, perfectly private over-the-air computation (P$^2$-AirComp). For clarity, we omit standard procedures from the system model and focus on the key novel aspects of our approach. 
%%%%%%%%%%------------------------------------
\subsection{Secret Key Generation}
Each client $k$ holds a secret key $\mathbf{S}_k \sim \mathrm{Unif}\left( (-\frac{1}{2},\frac{1}{2}]^D \right)$ to mask its private messege. %Specifically, each element of $\mathbf{S}_k$ is drawn from a uniform distribution over $[-\frac{1}{2},\frac{1}{2})$. 
To ensure that the maskings cancel out during aggregation, the secret keys are jointly constructed to satisfy the modulo-zero-sum condition,
\begin{align}
\left[\sum_{k=1}^K \mathbf{S}_k \right]\hspace{-2mm} \mod 1= \mathbf{0},
\label{eq:modulo_lero_sum_con}
\end{align}
where $\hspace{-2mm} \mod 1$ denotes the zero-centered modulo operation defined by $x \mod 1 \triangleq x - \left\lfloor x + \tfrac{1}{2} \right\rfloor$ and is applied element-wisely.
%denotes the element-wise modulo operation .
A general construction is to generate $J$ independent random vectors 
$\mathbf{N}_j \sim \mathrm{Unif}([-\frac{1}{2},\frac{1}{2})^D)$ for $j=1,\dots,J$, and construct $\{\mathbf{S}_k\}_{k=1}^K$ by a generator matrix $\mathbf{G}_S \in \mathbb{Z}^{K\times J}$ as
\begin{align}
    \begin{bmatrix}
        \mathbf{S}_1\\
        \vdots\\
        \mathbf{S}_K
    \end{bmatrix}
    =\mathbf{G}_S     
    \begin{bmatrix}
        \mathbf{N}_1\\
        \vdots\\
        \mathbf{N}_J
    \end{bmatrix} \hspace{-2mm} \mod 1,
\end{align}
where $\mathbf{G}_S$ satisfy the following two conditions:
\begin{enumerate}
    \item Any $d$ rows in $\mathbf{G}_S$ forms a full-rank matrix.
    \item All $K$ rows in $\mathbf{G}_S$ sum to $\mathbf{0}$, i.e, $\mathbf{1}_K^\top \mathbf{G}_S = \mathbf{0}^\top$, where $\mathbf{1}_K$ denote $K\times 1$ all-one vector.
\end{enumerate}
In this construction, each $\mathbf{S}_k\sim \mathrm{Unif}([-\frac{1}{2},\frac{1}{2})^D)$ and satisfies \eqref{eq:modulo_lero_sum_con}. Moreover, any $d$ secret keys are mutually independent. 

In this paper, we set $J=K-1$, ensuring that no subset of fewer than $K$ secret keys can cancel out. The only linear combination of the secret keys that results in full cancellation is given by \eqref{eq:modulo_lero_sum_con}. This property guarantees that no information about the private message other than the sum can be revealed, which will be discussed later.

%%%%%%%%%%------------------------------------
\subsection{P$^2$-AirComp}
Assume each client holds a private message $\mathbf{W}_k\in \mathbb{R}^D$,
and $\mathbf{W}\triangleq \sum_{k=1}^K \mathbf{W}_k\in [-a,a]^D$, where $0<a<\frac{1}{2}$. Each client $k$ first generates a signal based on its private message $\mathbf{W}_k$ and its equipped secret key $\mathbf{S}_k$,  
\begin{align}
\mathbf{e}_k=\left[\mathbf{W}_k+
\mathbf{S}_k\right] \hspace{-2mm} \mod 1.
\label{eq: tx_message}
\end{align}
Then, each client generates a transmit signal $\mathbf{x}_k$ by rescaling $\mathbf{e}_k$ as \eqref{eq: tx_signal}, and transmits $\mathbf{x}_k$ through the AirComp protocol in Section \ref{sec: AirComp model}, the superposed signal received at the server is given in \eqref{eq: y_signal1} and \eqref{eq: y_signal2}. In P$^2$-AirComp, the server receives
\begin{align}
    \mathbf{y}
    %=\sqrt{P}\sum_{k=1}^K \mathbf{e}_k + \mathbf{z}
    &= \sqrt{P}\sum_{k=1}^K \left( \left[\mathbf{W}_k+
\mathbf{S}_k\right] \hspace{-2mm} \mod 1\right)  + \mathbf{z}.
\end{align}
Then, the server estimates the sum by 
\begin{align}
    \hat{\mathbf{W}}
    &=\left[\frac{\mathbf{y}}{\sqrt{P}}\right] \hspace{-2mm} \mod 1\\
    %&=\left[\sum_{k=1}^K \left[\frac{1}{K}\mathbf{W}_k +\mathbf{S}_k\right] \hspace{-2mm} \mod 1 + \frac{\mathbf{z}}{\sqrt{P}} \right] \hspace{-2mm} \mod 1\\
    &=\left[\sum_{k=1}^K \left(\mathbf{W}_k +
\mathbf{S}_k\right) + \frac{\mathbf{z}}{\sqrt{P}} \right] \hspace{-2mm} \mod 1 \label{eq: server key step}\\
    &= \left[\sum_{k=1}^K \mathbf{W}_k+ \frac{\mathbf{z}}{\sqrt{P}}\right] \hspace{-2mm} \mod 1, \label{eq: noisy sum}
\end{align}
where \eqref{eq: server key step} applies the modulo additivity property, i.e., 
$\left[\sum_{k=1}^K [x_k] \mod 1 \right] \mod 1=\left[\sum_{k=1}^K x_k \right] \mod 1.$

%For privacy analysis, we guide the reader to Section \ref{sec: Privacy}.  

%In this design, the estimated sum $\hat{\mathbf{W}}$ may deviate from the true value $\mathbf{W}$ for two reasons. First, receiver-side noise introduces distortion. Second, if this distortion becomes sufficiently large, the modulo operation may cause additional distortion due to wrapping. The former effect is also present in conventional OTA, whereas the latter arises specifically from the incorporation of the modulo operation in OTA.
%For distortion analysis, we guide the reader to Section \ref{sec: distortion-snr}. 

In this regime, the power constraints are  
\begin{align}
&\mathbb{E}\!\left[\lVert \mathbf{e}_k \rVert^2\right]=\frac{1}{12}\leq P_E,\quad\quad\hspace{2mm} k\in [K],\\
&\mathbb{E}\!\left[\|\mathbf{x}_k\|^2\right]=\frac{P}{h_k^2}P_E\leq P_X,\quad  k\in [K].   
\end{align}
Following this, the common power scaling factor is limited by $P\leq\min_{k} \left\{\frac{P_X}{P_E} h_k^2\right\} $. 
Through this constraint, the channel coefficient implicitly affects the estimation result in \eqref{eq: noisy sum}, even though the channel coefficient does not appear explicitly. 

By carefully aligning the operational properties of real-field and modulo operations, the proposed scheme achieves simultaneous accurate summation and perfect privacy. 
In particular, modulo operation ensures that each transmitted signal is statistically independent of the underlying private message, thereby achieving perfect privacy. Real-field aggregation helps to avoid potential ambiguity, enabling correct computation.
Notably, this is achieved under a finite transmit power budget. Such advantages are not attainable by existing methods.

%%%%%%%%%%------------------------------------
%\vspace{-0.5em}
\section{Privacy Analysis} \label{sec: Privacy}
\subsection{Security at the Server}
\begin{theorem}[Perfect Privacy at the Server]\label{theo: server security}
In P$^2$-AirComp, the server obtains no information about the private messages beyond the aggregation, i.e., $I( \left\{\mathbf{W}_k\right\}_{k\in[K]}; \mathbf{y} | \mathbf{W} )=0$.   
\end{theorem}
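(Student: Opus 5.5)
The plan is to show that, conditioned on the aggregate $\mathbf{W}$, the whole vector of channel inputs $(\mathbf{e}_1,\dots,\mathbf{e}_K)$ is independent of the individual messages $\{\mathbf{W}_k\}_{k\in[K]}$. Then $\mathbf{y}$, which is a function of these inputs together with independent quantities, inherits the same independence. Concretely, I will establish the Markov chain
\begin{align}
\{\mathbf{W}_k\}_{k\in[K]} \;\to\; \mathbf{W} \;\to\; (\mathbf{e}_1,\dots,\mathbf{e}_K) \;\to\; \mathbf{y}.
\end{align}
Given this chain, $I(\{\mathbf{W}_k\};\mathbf{y}\mid\mathbf{W}) \le I(\{\mathbf{W}_k\};\mathbf{e}_1,\dots,\mathbf{e}_K\mid \mathbf{W}) = 0$ by the data-processing inequality.

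\textbf{Last link of the chain.} Since $\mathbf{y}=\sqrt{P}\sum_k \mathbf{e}_k+\mathbf{z}$, the output depends on the inputs only through the real sum of the $\mathbf{e}_k$ plus the noise $\mathbf{z}$. I will assume, as in the system model, that $\mathbf{z}$, the channel coefficients $h_k$ and hence $P$ are independent of the messages and of the keys. Under that assumption, $\mathbf{y}$ is conditionally independent of $(\{\mathbf{W}_k\},\mathbf{W})$ given $(\mathbf{e}_1,\dots,\mathbf{e}_K)$. Note that the real-field (non-modulo) sum causes no difficulty here: it is simply a deterministic function of the $\mathbf{e}_k$.

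\textbf{Key step: the law of the inputs given the messages.} I will compute the conditional law of $(\mathbf{e}_1,\dots,\mathbf{e}_K)$ given $\{\mathbf{W}_k\}$ from the key construction with $J=K-1$.

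First, any $K-1$ rows of $\mathbf{G}_S$ have full rank. Together with the stated property that any $d=K-1$ keys are mutually independent and uniform, this gives that $\mathbf{S}_1,\dots,\mathbf{S}_{K-1}$ are i.i.d.\ $\mathrm{Unif}((-\tfrac12,\tfrac12]^D)$. The zero-sum condition \eqref{eq:modulo_lero_sum_con} then forces $\mathbf{S}_K=[-\sum_{k<K}\mathbf{S}_k]\bmod 1$.

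Next, I use the translation invariance of the uniform measure on the torus, with the keys independent of the messages. This shows that $\mathbf{e}_k=[\mathbf{W}_k+\mathbf{S}_k]\bmod 1$, $k<K$, are i.i.d.\ uniform and independent of $\{\mathbf{W}_k\}$. By modulo additivity,
\begin{align}
\mathbf{e}_K=\left[\mathbf{W}-\sum_{k<K}\mathbf{e}_k\right]\bmod 1 .
\end{align}
Hence the conditional law of $(\mathbf{e}_1,\dots,\mathbf{e}_K)$ given $\{\mathbf{W}_k\}$ is the uniform law on the set $\{(\mathbf{e}_k): [\sum_k\mathbf{e}_k]\bmod 1=\mathbf{W}\}$, and it depends on the messages only through $\mathbf{W}$. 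This is exactly the middle link of the chain.

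\textbf{Main obstacle.} The main obstacle is making the independence claim for the $K-1$ keys rigorous. An integer matrix applied to uniform vectors and reduced mod $1$ yields independent uniform outputs only if the relevant $(K-1)\times(K-1)$ submatrix is invertible over the torus. Full real rank gives a surjective torus endomorphism, which pushes Haar measure forward to Haar measure, so uniformity holds even without unimodularity. I would argue it this way, citing the paper's stated property of the construction where possible.

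A second technical point is that $\mathbf{W}$ is itself a function of $\{\mathbf{W}_k\}$. The conditional-independence statement must therefore be phrased at the level of conditional distributions, which the computation above provides directly for every realization of $\{\mathbf{W}_k\}$.
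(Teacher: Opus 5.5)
Your proposal is correct and follows essentially the paper's route. The paper also reduces $\mathbf{y}$ to $\{\mathbf{e}_k\}$, though it goes through $(\{\mathbf{x}_k\},\mathbf{z})$ and a bijection step rather than your direct conditional-independence/data-processing argument, and it proves the Markov chain exactly as you do: the first $K-1$ inputs are i.i.d.\ uniform and independent of the messages, and $\mathbf{e}_K=[\mathbf{W}-\sum_{k<K}\mathbf{e}_k]\bmod 1$. Your Haar-pushforward remark (a nonsingular integer submatrix gives a surjective torus endomorphism, so no unimodularity is needed) usefully justifies what the paper only asserts; however, drop your preceding sentence claiming that invertibility over the torus is required, since it is false and contradicts that remark.
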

\begin{proof} The information leakage can be quantified by
    \begin{align}
    &I\left( \left\{\mathbf{W}_k\right\}_{k\in[K]}; \mathbf{y} \middle| \mathbf{W} \right) \notag\\
    \leq&I\left( \left\{\mathbf{W}_k\right\}_{k\in[K]}; \left\{\mathbf{x}_k\right\}_{k\in[K]}, \mathbf{z} \middle| \mathbf{W} \right) \label{eq:server_sec_step0}\\
    =&I\left( \left\{\mathbf{W}_k\right\}_{k\in[K]}; \left\{\mathbf{x}_k\right\}_{k\in[K]} \middle| \mathbf{W} \right)  \label{eq:server_sec_step1}\\
    =&I\left( \left\{\mathbf{W}_k\right\}_{k\in[K]}; \left\{\mathbf{e}_k\right\}_{k\in[K]} \middle| \mathbf{W} \right) \label{eq:server_sec_step2}\\
    =&0, \label{eq:server_sec_step3}
\end{align}
where 
\begin{itemize}
    \item \eqref{eq:server_sec_step0} holds since $\mathbf{y}$ is determined by $\left\{\mathbf{x}_k\right\}_{k\in[K]}$ and $\mathbf{z}$; 
    \item \eqref{eq:server_sec_step1} is due to the independence of the channel noise $\mathbf{z}$; 
    \item \eqref{eq:server_sec_step2} follows the invariance property of mutual information (MI) under bijective transformations \cite[Theorem 3.7]{polyanskiy2025information}, i.e.,  $\mathrm{I}(X; Y)=\mathrm{I}(f(X);  g(Y))$ given that $f(\cdot)$ and $g(\cdot)$ are bijective functions; 
    \item \eqref{eq:server_sec_step3} follows from the Markov chain $\{\mathbf{W}_k\}_{k\in[K]} \circbar \mathbf{W}\circbar \left\{\mathbf{e}_k\right\}_{k\in[K]}$. Proof of this Markov chain can be found in Appendix \ref{appx: Markov_proof}. 
\end{itemize}
\vspace{-5mm}
\end{proof}

%\begin{lemma}[The permutation invariance property of MI \cite{Polyanskiy_Wu_2024}]
%    Let $X$ and $Y$ be multivariate random variables, and $f(\cdot)$ and $g(\cdot)$ be bijective functions. The mutual information remains invariant in this transformation, i.e.,  $\mathrm{I}(X; Y)=\mathrm{I}(f(X);  g(Y))$.
%\end{lemma}
%Lemma \ref{lemma:permutation invariance} is stated in Theorem 3.7 of Polyanskiy and Wu \cite{Polyanskiy_Wu_2024}. Each unique pair $(X, Y)$ is remapped to a unique pair $(f(X), g(Y))$, consequently, the one-to-one mapping preserves the dependency structure between $X$ and $Y$, the mutual information remains unchanged. To clarify and avoid confusion, it is important to note that this property does not apply to differential entropy or joint differential entropy. This is because the probability density function (PDF) must be adjusted by the Jacobian determinant to ensure that the total probability remains normalized to 1 when scaling changes occur in $f(X)$, $f(Y)$.

%%%%%%%%%%------------------------------------
%\vspace{-0.5em}
\subsection{Security at Clients}
Due to correlation among secret keys $\{\mathbf{S}_k\}_{k\in[K]}$ and wireless broadcasting, client $k$ may use its secret key $\mathbf{S}_k$ as auxiliary information to infer client $k'$ private messege $\mathbf{W}_{k'}$ from based on the broadcast signal $\mathbf{x}_{k'}$, where $k'\neq k$. 
Denote the channel coefficient from client $k'$ to $k$ by $h_{k,k'}$, and the receiver noise by $\mathbf{z}_{k,k'}\sim \mathcal{N}(\mathbf{0}, N_0 \mathbf{I}_D)$, then the received signal at client $k$ is 
$\mathbf{y}_{k,k'}=h_{k,k'} \mathbf{x}_{k'}+ \mathbf{z}_{k,k'}$.
\begin{theorem}[Perfect Privacy at Clients]
In P$^2$-AirComp, assume that channel conditions is independent from learning process, when $K\geq 3$, the clients can infer no information about the private messages beyond the aggregation, i.e., $I( \{\mathbf{W}_{k'}\}_{k'\neq k}; \{\mathbf{y}_{k,k'}\}_{k'\neq k} | \mathbf{W}_{k}, \mathbf{S}_k, \mathbf{W} )=0$.   
\end{theorem}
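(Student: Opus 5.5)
We mirror the chain of steps in the proof of Theorem~\ref{theo: server security}, but condition on the client's side information $(\mathbf{W}_k,\mathbf{S}_k)$ throughout.

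\textbf{Reduction to the masked messages.} The observations $\{\mathbf{y}_{k,k'}\}_{k'\neq k}$ are functions of $\{\mathbf{x}_{k'}\}_{k'\neq k}$, the channels $\{h_{k,k'}\}$, and the noises $\{\mathbf{z}_{k,k'}\}$. By assumption, the channels and noises are independent of the messages and keys. Hence, by data processing,
\begin{align}
&I\left( \{\mathbf{W}_{k'}\}_{k'\neq k}; \{\mathbf{y}_{k,k'}\}_{k'\neq k} \middle| \mathbf{W}_{k}, \mathbf{S}_k, \mathbf{W} \right) \notag\\
\leq\ & I\left( \{\mathbf{W}_{k'}\}_{k'\neq k}; \{\mathbf{x}_{k'}\}_{k'\neq k} \middle| \mathbf{W}_{k}, \mathbf{S}_k, \mathbf{W} \right). \notag
\end{align}
Since $\mathbf{x}_{k'}=\alpha_{k'}\mathbf{e}_{k'}$ with $\alpha_{k'}\neq 0$ independent of everything else, we can condition on the channels. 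The bijective-invariance property \cite[Theorem 3.7]{polyanskiy2025information} then replaces $\{\mathbf{x}_{k'}\}$ by $\{\mathbf{e}_{k'}\}_{k'\neq k}$. It therefore suffices to show that, conditionally on $(\mathbf{W}_k,\mathbf{S}_k,\mathbf{W})$, the vector $\{\mathbf{e}_{k'}\}_{k'\neq k}$ is independent of $\{\mathbf{W}_{k'}\}_{k'\neq k}$.

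\textbf{Distribution of the masked messages.} By the generator construction with $J=K-1$ and the full-rank property, any $K-1$ keys are i.i.d.\ uniform on the torus. The remaining key is determined by \eqref{eq:modulo_lero_sum_con}. In particular, the $K-1$ keys $\{\mathbf{S}_{k'}\}_{k'\neq k}$ are constrained only by $\sum_{k'\neq k}\mathbf{S}_{k'}\equiv -\mathbf{S}_k \pmod 1$. Given $\mathbf{S}_k$, the set $\{\mathbf{S}_{k'}\}_{k'\neq k}$ is uniform over the $(K-2)$-dimensional sub-torus of tuples satisfying this constraint. For $K\geq 3$ this sub-torus is nontrivial, which is where the hypothesis enters. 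Now fix the messages. The map $\{\mathbf{S}_{k'}\}\mapsto\{\mathbf{e}_{k'}\}=\{[\mathbf{W}_{k'}+\mathbf{S}_{k'}]\bmod 1\}$ is a translation on the torus. It maps this sub-torus onto the coset
\[
\Big\{\textstyle\sum_{k'\neq k}\mathbf{e}_{k'}\equiv \sum_{k'\neq k}\mathbf{W}_{k'}-\mathbf{S}_k \equiv \mathbf{W}-\mathbf{W}_k-\mathbf{S}_k \pmod 1\Big\},
\]
and it preserves the uniform (Haar) measure. The coset depends only on the conditioning variables $(\mathbf{W},\mathbf{W}_k,\mathbf{S}_k)$. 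Hence the conditional law of $\{\mathbf{e}_{k'}\}_{k'\neq k}$ given $(\{\mathbf{W}_{k'}\}_{k'\neq k},\mathbf{W}_k,\mathbf{S}_k,\mathbf{W})$ does not depend on $\{\mathbf{W}_{k'}\}_{k'\neq k}$. This yields the Markov chain $\{\mathbf{W}_{k'}\}_{k'\neq k}\circbar(\mathbf{W}_k,\mathbf{S}_k,\mathbf{W})\circbar\{\mathbf{e}_{k'}\}_{k'\neq k}$, so the mutual information is zero. Keys are generated independently of messages, so conditioning on messages leaves the key distribution unchanged.

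\textbf{Main obstacle.} The delicate step is justifying rigorously that, given $\mathbf{S}_k$, the remaining keys are uniform on the constraint sub-torus, and that translation preserves this law. This requires care because the distribution is singular with respect to Lebesgue measure on $(-\tfrac12,\tfrac12]^{D(K-1)}$. The plan is to parametrize: take $K-2$ of the other keys, say $\{\mathbf{S}_{k'}\}_{k'\notin\{k,k_0\}}$, which together with $\mathbf{S}_k$ form $K-1$ i.i.d.\ uniform vectors by the full-rank property. Set $\mathbf{S}_{k_0}=[-\mathbf{S}_k-\sum\mathbf{S}_{k'}]\bmod 1$. Then $\mathbf{e}_{k_0}$ is a deterministic function of the other $\mathbf{e}_{k'}$, $\mathbf{W}$, $\mathbf{W}_k$ and $\mathbf{S}_k$. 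Each other $\mathbf{e}_{k'}=[\mathbf{W}_{k'}+\mathbf{S}_{k'}]\bmod 1$ is i.i.d.\ uniform independent of the messages by the one-time-pad property of modulo-1 addition. This reproduces the argument of Appendix~\ref{appx: Markov_proof} with $\mathbf{S}_k$ added to the conditioning.
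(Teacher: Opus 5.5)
Your proposal is correct and follows essentially the same route as the paper. Data processing and bijective invariance reduce the leakage to $I(\{\mathbf{W}_{k'}\}_{k'\neq k};\{\mathbf{e}_{k'}\}_{k'\neq k}\mid \mathbf{W}_k,\mathbf{S}_k,\mathbf{W})$, and the Markov chain then follows as in Appendix~\ref{appx: Markov_proof2}: $K-2$ masked messages are i.i.d.\ uniform and independent of the messages, and the last one is fixed modulo~1 by them and the conditioning variables. Conditioning on $\mathbf{W}$ instead of the paper's $\mathbf{W}^{-k}$ is equivalent given $\mathbf{W}_k$.

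Your expression for that last message correctly keeps the $-\mathbf{S}_k$ term, which the paper's formula for $\mathbf{e}_{k_0}$ drops (harmlessly, since $\mathbf{S}_k$ is conditioned on). Also, the argument does not actually need $K\geq 3$: for $K=2$ the other message is already fixed by $(\mathbf{W}_k,\mathbf{W})$.
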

\begin{proof}
Given the independence of the channel conditions, when $K\geq 3$, the information leakage can be quantified by 
\begin{align}
    &I\left( \{\mathbf{W}_{k'}\}_{k'\neq k}; \{\mathbf{y}_{k,k'}\}_{k'\neq k} \middle| \mathbf{W}_{k}, \mathbf{S}_k, \mathbf{W} \right) \notag\\
    \leq&I\left( \{\mathbf{W}_{k'}\}_{k'\neq k}; \{\mathbf{x}_{k'}, \mathbf{z}_{k,k'}\}_{k'\neq k} \middle| \mathbf{W}_{k}, \mathbf{S}_k, \mathbf{W} \right) \label{eq:client_sec_step0}\\
    =&I\left( \{\mathbf{W}_{k'}\}_{k'\neq k}; \{\mathbf{e}_{k'}\}_{k'\neq k} \middle| \mathbf{W}_{k}, \mathbf{S}_k, \mathbf{W} \right)\label{eq:client_sec_step1}\\
    =&I\left( \{\mathbf{W}_{k'}\}_{k'\neq k}; \{\mathbf{e}_{k'}\}_{k'\neq k} \middle| \mathbf{W}_{k}, \mathbf{S}_k, \mathbf{W}^{-k} \right)\label{eq:client_sec_step2}\\
    =&0 \label{eq:client_sec_step3}, 
\end{align}
where $\mathbf{W}^{-k}=\sum_{k'\neq k} \mathbf{W}_{k'}$ denote the sum without client $k$'s contribution. The proof proceeds similarly to Theorem \ref{theo: server security}, except that  \eqref{eq:client_sec_step3} follows from the Markov chain $\{\mathbf{W}_{k'}\}_{k'\neq k} \circbar (\mathbf{W}_{k}, \mathbf{S}_k, \mathbf{W}^{-k}) \circbar \left\{\mathbf{e}_{k'}\right\}_{k'\neq k}$. Proof of this Markov chain can be found in Appendix \ref{appx: Markov_proof2}.
\end{proof}

\begin{remark}[Infeasibility when $K=2$]
When $K=2$, \eqref{eq:modulo_lero_sum_con} implies that $\mathbf{S}_2 = [-\mathbf{S}_1] \bmod 1$. Although privacy is preserved on the server, the secret keys no longer protect the private message across clients. In this case, while the channel noise $\mathbf{z}_{k,k'}$ can protect privacy among clients, nonzero information leakage persists.
\end{remark}

%%%%%%%%%%%%%%%------------------------------
%\vspace{-0.5em}
\section{Distortion Analysis} \label{sec: distortion-snr}
%Assume that the boundaries are large enough to ensure satisfactory privacy and that the small noise at the Rx will not make the mod fold again, then analysis of distortion, then convergence analysis.  
%The point is that the modulo wrapping affects distortion....
This section analyzes signal-dependent distortion given a noise variance. 
The noise can, first, affect the deviation of the estimation $\hat{\mathbf{W}}$ from the truth $\mathbf{W}$ and, second, induce wrapping in the modulo operation. 
The former effect is also present in conventional AirComp, whereas the latter arises specifically from the modulo operation.
In general, the impact of noise depends on both its amplitude and the transmitted signal. However, their joint effect is intricate and not transparent. 
When the signal is around $\mathbf{0}$, it can tolerate more noise before wrapping occurs than when it is near other values.
Apart from this, large noise is not necessarily detrimental. 
%Due to this, increasing the noise variance may not necessarily result in larger distortion. 
An extreme example arises when the noise is concentrated around points of the form $\boldsymbol{l} + \frac{1}{2}\cdot \mathbf{1}$, $\boldsymbol{l} \in \mathbb{Z}^D$, in which case accurate estimation can still be achieved regardless of the magnitude of $\boldsymbol{l}$.

%\vspace{-0.5em}
%%%%%%%%%%------------------------------------
\subsection{Pointwise Distortion in Expectation}
Define the effective noise in the estimation \eqref{eq: noisy sum} by $\mathbf{n}=\frac{\mathbf{z}}{\sqrt{P}}$, then $\mathbf{n}\sim\mathcal{N}(\mathbf{0}, \sigma_{\mathrm{eff}}^2\mathbf{I}_D)$ and $\sigma_{\mathrm{eff}}\triangleq \sqrt{\frac{N_0}{P}}$. When $\mathbf{W}=\mathbf{s}$, let $\mathbf{y}=\mathbf{s}+\mathbf{n}$, then $\hat{\mathbf{W}}=\hat{\mathbf{s}}= [\mathbf{y}] \mod 1$.
Assume that the entries of $\mathbf{W}$ are mutually independent.    
Let $s_i$ denote the $i$-th entry of $\mathbf{s}$. The MSE distortion between $\mathbf{W}$ and $\hat{\mathbf{W}}$ is given by 
\begin{align}
    \delta(\mathbf{s})
    &=\mathbb{E}_{\mathbf{n}}\left[ \lVert\hat{\mathbf{s}}-\mathbf{s}\rVert^2 \right] =\sum_{d=1}^{D} \mathbb{E}_{n_d}\left[ (\hat{s}_d-s_d)^2 \right] \label{eq: pre_pointwise_d0}\\
    &=\sum_{d=1}^{D} \sum_{l\in \mathbb{Z}} \int_{l-\frac{1}{2}}^{l+\frac{1}{2}} (y_d-s_d-l)^2 \phi_{\sigma_{\mathrm{eff}}} (y_d-s_d)\; dy_d  \label{eq: pre_pointwise_d1}\\
    &=\sum_{d=1}^{D} \underbrace{\sum_{l\in \mathbb{Z}} \int_{l-s_d-\frac{1}{2}}^{l-s_d+\frac{1}{2}} (n_d-l)^2 \phi_{\sigma_{\mathrm{eff}}} (n_d)\; dn_d }_{\delta(s_d)} \label{eq: pre_pointwise_d2}, 
    %&=D \sum_{l\in \mathbb{Z}} \int_{l-s_d-\frac{1}{2}}^{l-s_d+\frac{1}{2}} (n^2-2ln+l^2) \phi_{\sigma_{\mathrm{eff}}} (n_d)\; dn_d, \label{eq: pre_pointwise_d}
\end{align}
where \eqref{eq: pre_pointwise_d0} is due to the independence of noise and message across entries; in \eqref{eq: pre_pointwise_d1}, $\phi_{\sigma}(x)
\triangleq
\frac{1}{\sqrt{2\pi}\sigma}
\exp\!\left(-\frac{x^2}{2\sigma^2}\right)$; \eqref{eq: pre_pointwise_d2} follows that $n_d=y_d-s_d$; and $y_d$ and $n_d$ are the $d$-th entry in the corresponding vector.  

Let $a_{l,d}\triangleq l-s_d-\frac{1}{2}$ and $b_{l,d}\triangleq l-s_d+\frac{1}{2}$, we have
\begin{align}
&\int_{a_{l,d}}^{b_{l,d}} \phi_{\sigma_{\mathrm{eff}}} (n_d)\; dn_d \hspace{3.5mm} =\Phi\left(\frac{b_{l,d}}{\sigma_{\mathrm{eff}}}\right)-\Phi\left(\frac{a_{l,d}}{\sigma_{\mathrm{eff}}}\right), \label{eq: subterm1}\\
&\int_{a_{l,d}}^{b_{l,d}} n_d \phi_{\sigma_{\mathrm{eff}}} (n_d)\; dn_d
\hspace{-0.5mm}=\sigma_{\mathrm{eff}}\hspace{-0.5mm}\left( \hspace{-0.5mm}\psi\hspace{-0.5mm}\left(\frac{a_{l,d}}{\sigma_{\mathrm{eff}}}\right)
\hspace{-0.5mm}-\hspace{-0.5mm}\psi\hspace{-0.5mm}\left(\frac{b_{l,d}}{\sigma_{\mathrm{eff}}}\right) \hspace{-0.5mm} \right), \label{eq: subterm2}\\
&\int_{a_{l,d}}^{b_{l,d}} n_d^2 \phi_{\sigma_{\mathrm{eff}}} (n_d)\; dn_d=
\sigma_{\mathrm{eff}}\bigg( a_{l,d}\psi\left(\frac{a_{l,d}}{\sigma_{\mathrm{eff}}}\right) \notag\\
&\hspace{2mm}-b_{l,d}\psi\left(\frac{b_{l,d}}{\sigma_{\mathrm{eff}}}\right)\bigg) +\sigma_{\mathrm{eff}}^2 \left(\Phi\left(\frac{b_{l,d}}{\sigma_{\mathrm{eff}}}\right) - \Phi\left(\frac{a_{l,d}}{\sigma_{\mathrm{eff}}}\right) \right), \label{eq: subterm3}
\end{align}
where $\psi(x) = \frac{1}{\sqrt{2\pi}} e^{-\frac{1}{2}x^2}$ and $\Phi(x)= \int_{-\infty}^{x} \psi(t)\, dt$ denote the probability density function (PDF) and cummulative density function (CDF) of $\mathcal{N}(0,1)$, respectively. 
The calculation of \eqref{eq: subterm2} and \eqref{eq: subterm3} uses the fact that $\frac{d(e^{-n_d/(2\sigma^2)} )}{dn_d}=-\frac{n_d}{\sigma^2} e^{-\frac{n_d}{2\sigma^2}}$ and \eqref{eq: subterm3} further applies integration by parts formula. 

Substitute \eqref{eq: subterm1}-\eqref{eq: subterm3} into \eqref{eq: pre_pointwise_d2}, we obtain the close-form of the pointwise distortion,
\begin{align}
    &\delta(\mathbf{s})=\sum_{d=1}^{D}\sum_{l\in\mathbb{Z}} \bigg\{\hspace{-1mm} 
    \left(\sigma_{\mathrm{eff}}^2+l^2\right) 
    \left( \Phi\left(\frac{b_{l,d}}{\sigma_{\mathrm{eff}}}\right)-\Phi\left(\frac{a_{l,d}}{\sigma_{\mathrm{eff}}}\right) \right) \notag\\
    &\hspace{-0.5mm}+\hspace{-0.5mm}(a_{l,d}-2l)\sigma_{\mathrm{eff}} \psi\left(\frac{a_{l,d}}{\sigma_{\mathrm{eff}}}\right) 
    \hspace{-0.5mm}-\hspace{-0.5mm}(b_{l,d}-2l)\sigma_{\mathrm{eff}} \psi\left(\frac{b_{l,d}}{\sigma_{\mathrm{eff}}}\right) \hspace{-1mm} \bigg\}.
\end{align}

%%%%%%%--------------------------------------
%\vspace{-0.5em}
\subsection{Bounds on Distortion in Expectation}
This section presents the achievable upper and lower bounds of the pointwise MSE distortion in P$^2$-AirComp. We show that, when applying a modulo operation to a Gaussian-distributed random variable, the second-moment integral of the resulting distribution is minimized when the center of the modulo interval aligns with the mean of the Gaussian distribution, and the integral increases monotonically as the deviation between the two centers increases.

\begin{theorem}
For arbitrarily distributed $\mathbf{W}$ within hypercube $[-a,a]^D$, under zero-mean Gaussian channel noise $ \mathbf{z} \sim \mathcal{N}(\mathbf{0}, N_0 \mathbf{I}_D)$, given common power scaling factor $P$, by adopting P$^2$-AirComp, it yields that 
%the pointwise distortion in expectation resulting from  is bounded by
\begin{align}
   D\delta(0) \leq \delta(\mathbf{s})\leq D\delta(a).
\end{align}
%where 
%\begin{align*}
%    \delta(0)&=,\\
%    \delta(a)&=
%\end{align*}
Moreover, the per-dim (per-dimension) distortion satisfies $\delta(s'_d)<\delta(s_d)$ if $\lvert s'_d\rvert<\lvert s_d\rvert$, for $d=1,\cdots,D$.

\end{theorem}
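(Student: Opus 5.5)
Since $\delta(\mathbf{s})=\sum_{d=1}^D \delta(s_d)$ by \eqref{eq: pre_pointwise_d2}, with each summand depending only on $s_d\in[-a,a]$, both the two-sided bound and the per-dimension claim follow once we show that the scalar function $\delta(s)$ is even and strictly increasing in $|s|$ on $[-\tfrac12,\tfrac12]$, hence on $[-a,a]\subset(-\tfrac12,\tfrac12)$. Then $\delta(0)\le\delta(s_d)\le\delta(a)$ for every $d$, and summing over $d$ gives $D\delta(0)\le\delta(\mathbf{s})\le D\delta(a)$.

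To analyze the scalar function, we rewrite it as $\delta(s)=\mathbb{E}\big[([s+n]\bmod 1 - s)^2\big]$ with $n\sim\mathcal{N}(0,\sigma_{\mathrm{eff}}^2)$. Evenness follows from the symmetry of the Gaussian and the oddness of the zero-centered modulo (up to the measure-zero boundary set): replacing $(s,n)$ by $(-s,-n)$ flips the sign of the error $[s+n]\bmod 1 - s$. For monotonicity, we use the periodization $g(u)=\sum_{l\in\mathbb{Z}}\phi_{\sigma_{\mathrm{eff}}}(u+l)$, which is the wrapped-normal density of $n \bmod 1$ on the circle. In terms of the closest-integer error $m(v)=v-[v]\bmod 1$, we have $\hat s - s = n - m(s+n)$. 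A direct route is to differentiate the closed form in $s$. The integrand $(n_d-l)^2\phi(n_d)$ is smooth and the integration limits $l-s\pm\tfrac12$ move with $s$, so by Leibniz's rule
\begin{align*}
\delta'(s)=\sum_{l\in\mathbb{Z}}\Big[ (l-s-\tfrac12-l)^2\phi(l-s-\tfrac12) - (l-s+\tfrac12-l)^2\phi(l-s+\tfrac12)\Big]\cdot(\pm 1),
\end{align*}
and we must track carefully which boundary term belongs to which $l$. Regrouping across adjacent $l$ (the upper limit for $l$ equals the lower limit for $l+1$, where the error jumps from $-(s-\tfrac12)$-type values to the shifted branch), $\delta'(s)$ reduces to a sum over the wrap points $t_l=l+\tfrac12-s$ of $\phi(t_l)$ times the jump in squared error. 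The jump in squared error at a wrap point is $(t-l)^2-(t-l-1)^2=2(t-l)-1$, which evaluates to $-2s$. We therefore expect
\begin{align*}
\delta'(s) = 2s\sum_{l\in\mathbb{Z}}\phi_{\sigma_{\mathrm{eff}}}\!\big(l+\tfrac12-s\big) - (\text{interior term}),
\end{align*}
where the interior term is $-\tfrac{d}{ds}$ of the error under fixed wrapping, namely $2\,\mathbb{E}[\hat s-s]$. Evaluating this is the step to get right. The cleanest formulation is $\delta(s)=\int_{-1/2}^{1/2}(u-s)^2\, g(u-s)\,du$, obtained by substituting $u=\hat s$, whose density is $g(u-s)$. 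Shifting to $v=u-s$ gives $\delta(s)=\int_{-1/2-s}^{1/2-s} v^2 g(v)\,dv$. Then
\begin{align*}
\delta'(s)=g(\tfrac12+s)\,(\tfrac12+s)^2 - g(\tfrac12 - s)\,(\tfrac12-s)^2,
\end{align*}
after using $g(-\tfrac12-s)=g(\tfrac12+s)$ by evenness and periodicity.

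The main obstacle is to show that this derivative is positive for $s\in(0,\tfrac12)$, i.e.\ that
\begin{align*}
(\tfrac12+s)^2 g(\tfrac12+s) > (\tfrac12-s)^2 g(\tfrac12-s).
\end{align*}
Since $g$ is $1$-periodic and even, $g(\tfrac12+s)=g(\tfrac12-s)$: indeed $g(\tfrac12+s)=g(-\tfrac12-s)=g(\tfrac12-s)$ by evenness and then periodicity. The inequality therefore reduces to $(\tfrac12+s)^2>(\tfrac12-s)^2$, which holds for $s>0$, together with $g>0$. Thus $\delta'(s)>0$ on $(0,\tfrac12)$ and, by evenness, $\delta'(s)<0$ on $(-\tfrac12,0)$. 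This gives strict monotonicity in $|s|$, establishing $\delta(s'_d)<\delta(s_d)$ whenever $|s'_d|<|s_d|$, and hence the bounds.

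What remains is to justify the density formula for $\hat s$ and the interchange of the sum over $l$ with differentiation. Both follow from the uniform convergence of the Gaussian series on compact sets.
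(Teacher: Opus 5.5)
Your argument is correct and is essentially the paper's proof. Leibniz on the moving limits $l-s\pm\tfrac12$, together with the $1$-periodicity of $g(u)=\sum_{l}\phi_{\sigma_{\mathrm{eff}}}(u+l)$ (which plays the role of the paper's reindexing $l'=l-1$), gives $\delta'(s)=2s\,g(\tfrac12-s)=2s\sum_{l}\phi_{\sigma_{\mathrm{eff}}}(l-s+\tfrac12)$; evenness and summing over coordinates then finish as in the paper, and your single-integral form $\int_{-1/2-s}^{1/2-s}v^2g(v)\,dv$ conveniently avoids differentiating the series termwise.

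Delete the intermediate guess of an ``interior term'' $2\,\mathbb{E}[\hat s-s]$. In the representation $\sum_l\int_{l-s-1/2}^{l-s+1/2}(n-l)^2\phi_{\sigma_{\mathrm{eff}}}(n)\,dn$ the integrand does not depend on $s$, so no such term exists. Moreover $\mathbb{E}[\hat s-s]\neq0$ for $s\neq0$, so that term would contradict your own final formula.
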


\begin{proof}
Taking the derivative of $\delta(s_d)$ in \eqref{eq: pre_pointwise_d2} with respect to $s_d$ and using Leibniz rule, we get 
\begin{align}
    &\frac{d \delta(s_d)}{d s_d} = \sum_{l\in\mathbb{Z}} (b_{l,d}-l)^2 \phi_{\sigma_{\mathrm{eff}}} (b_{l,d}) \frac{d b_{l,d}}{d s_d} \notag\\
    &\hspace{3cm} - \sum_{l\in\mathbb{Z}} (a_{l,d}-l)^2 \phi_{\sigma_{\mathrm{eff}}} (a_{l,d}) \frac{d a_{l,d}}{d s_d}, 
        \label{eq: dI/ds}
\end{align}
by the definition of $a_{l,d}$ and $b_{l,d}$, $\frac{d b_{l,d}}{d s_d}=\frac{d a_{l,d}}{d s_d}=-1$. Using this fact substituting $a_{l,d}$ and $b_{l,d}$ into \eqref{eq: dI/ds}, we obtain
\begin{align}
    %&\sum_{l\in\mathbb{Z}} \left( (a_{l,d}-l)^2 \phi_{\sigma_{\mathrm{eff}}} (a_{l,d})- (b_{l,d}-l)^2 \phi_{\sigma_{\mathrm{eff}}} (b_{l,d}) \right)\\
    &\sum_{l\in\mathbb{Z}} \left( \hspace{-1mm}\left(s_d+\frac{1}{2}\right)^2 \hspace{-1mm}\phi_{\sigma_{\mathrm{eff}}} (a_{l,d})-\hspace{-0.5mm} \left(s_d-\frac{1}{2}\right)^2 \hspace{-1mm}\phi_{\sigma_{\mathrm{eff}}} (b_{l,d}) 
    \hspace{-0.5mm}\right)\\
    %=&  \left(s_d+\frac{1}{2}\right)^2 \hspace{-1mm} \sum_{l\in\mathbb{Z}}\phi_{\sigma_{\mathrm{eff}}} (a_{l,d})- \hspace{-0.5mm}\left(s_d-\frac{1}{2}\right)^2 \hspace{-1mm} \sum_{l\in\mathbb{Z}}\phi_{\sigma_{\mathrm{eff}}} (b_{l,d}) \\
    %&=  \left(s_d+\frac{1}{2}\right)^2 \hspace{-1mm} \sum_{j\in\mathbb{Z}}\phi_{\sigma_{\mathrm{eff}}} (a_j+1)- \hspace{-0.5mm}\left(s_d-\frac{1}{2}\right)^2 \hspace{-1mm} \sum_{l\in\mathbb{Z}}\phi_{\sigma_{\mathrm{eff}}} (b_{l,d}) \label{eq: change counter} \\
    =& \hspace{-0.5mm} \left(\hspace{-0.5mm}s_d+\frac{1}{2}\right)^2 \hspace{-1.5mm} \sum_{l'\in\mathbb{Z}} \hspace{-0.5mm} \phi_{\sigma_{\mathrm{eff}}} (b_{l',d})- \hspace{-0.5mm}\left(s_d-\frac{1}{2}\right)^2 \hspace{-1.5mm} \sum_{l\in\mathbb{Z}}\phi_{\sigma_{\mathrm{eff}}} (b_{l,d}) \label{eq: change aj}\\
    =&  2 s_d \sum_{l\in\mathbb{Z}}\phi_{\sigma_{\mathrm{eff}}} \left(l-s_d+\frac{1}{2}\right), 
\end{align}
where \eqref{eq: change aj} applies the change of the counter by letting $l'=l-1$ and uses $b_{l',d}=a_{l'}+1$. 
Since $\phi_{\sigma_{\mathrm{eff}}}(\cdot)>0$, we have 
\begin{align}
    \mathrm{sign}\left(\frac{d \delta(s_d)}{d s_d} \right)=\mathrm{sign}\left(s_d \right)=
    \begin{cases}
        -, \hspace{3mm} s_d<0\\
        0, \hspace{4mm} s_d=0\\
        +, \hspace{3mm} s_d>0\\
    \end{cases}.
    \label{eq: direvative_sign}
\end{align}
\iffalse
\begin{figure*}
    \centering
    \begin{minipage}[b]{0.32\textwidth}
        \centering
        \includegraphics[width=\linewidth]{dist_snr_dense.pdf}
        \caption{Theoretical MSE curves ($P_X=+\infty$).}
        \label{fig:dist-snr}
    \end{minipage}
    \hfill
    \begin{minipage}[b]{0.32\textwidth}
        \centering
        \includegraphics[width=\linewidth]{dist_privacy.pdf}
        \caption{Theoretical Privacy-MSE trade-off ($P_X=+\infty$)}
        \label{fig:dist-privacy}
    \end{minipage}
    \hfill
    \begin{minipage}[b]{0.32\textwidth}
        \centering
        \includegraphics[width=\linewidth]{dist_privacy_practical.pdf}
        \caption{Practical Privacy-MSE trade-off ($P_X=$)}
        \label{fig:test_acc_noniid_new}
    \end{minipage}
    \vspace{-3mm}
\end{figure*}
\fi
This implies that the per-dim distortion $\delta(s_d)$ monotoneously increases as $s_d$ moves from $0$ toward the boundary.
By using the change of variable formula in the integral, it can be verified that $\delta(-s_d)=\delta(s_d)$. Combined these facts, it can be concluded the $\delta(\mathbf{s})$ achieves its maximum at the boundary points $\mathbf{s} = \{a,-a\}^D$ and its minimum at $\mathbf{s} = \mathbf{0}$.  
\end{proof}

%%%%%%%%%%------------------------------------
%\vspace{-0.5em}
\section{Simulations}
This section presents numerical results for the proposed P$^2$-AirComp and compares its performance with three state-of-the-art methods under perfect CSIT.
\begin{itemize}
    \item Private AirComp with independent privacy noises with $\mathcal{N}(\mathbf{0}, \sigma_{\mathrm{indp}}^2 \mathbf{I}_D)$.
    \item Private AirComp with correlated privacy noises with $\mathcal{N}(\mathbf{0}, \sigma_{\mathrm{corr}}^2 \mathbf{I}_D)$. The generation matrix is $\mathbf{G}_{\mathrm{corr}} = \frac{\sigma_{\mathrm{corr}}}{\sqrt{5}} \left( 2\mathbf{I}_K - \mathbf{I}_K^{+1} \right)$, where $\mathbf{I}_K^{+1}$ denotes the one-step right circular shift of $\mathbf{I}_K$. 
    \item Private AirComp with zero-sum correlated privacy noises. The generation of zero-sum noises follows \cite[Construction IV.3]{weng2025codingenforcedrobustsecureaggregation} with $\gamma=1$, $\lambda^2=\sigma_{\mathrm{zero}}^2$. 
\end{itemize}
For all methods, each client transmits  $\mathbf{W}_k$, and the protocols perform as described in Section~\ref{sec: AirComp model}.
The system parameters are set as follows: $K=10$, $D=10$, and $a=\frac{1}{3}$. The channel coefficients are modeled as real-valued Rician fading to capture both line-of-sight (LoS) and scattered components. 
The channel between user $k$ and the receiver is given by $h_k = \sqrt{\frac{\kappa}{\kappa+1}} + \sqrt{\frac{1}{\kappa+1}}\iota $
where the Rician factor is set to $\kappa=5\mathrm{dB}$ and $\iota \sim \mathcal{N}(0,1)$. The additive channel noise is modeled as $\mathbf{z} \sim \mathcal{N}(\mathbf{0}, N_0 \mathbf{I}_D)$ with $N_0=1$. All users are subject to a common transmit power constraint $\frac{P_X}{N_0} \leq 15\text{dB}$, and the scaling factor $P$ is chosen accordingly to ensure feasibility.

%%%%%%%%%%-----------------------------------
\vspace{-0.5em}
\subsection{Pointwise MSE}
\begin{figure}
    \centering
    \begin{minipage}[b]{0.5\textwidth}
        \centering
        \includegraphics[width=\linewidth]{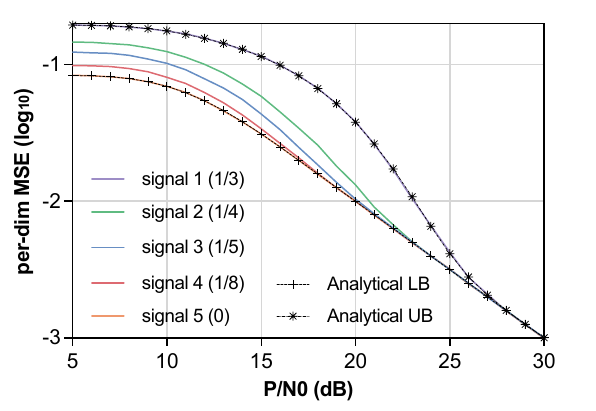}
        \vspace{-7mm}
        \caption{Pointwise MSE curves for various transmitted signals, compared with the derived analytical upper and lower bounds.}
        \vspace{-3mm}
        \label{fig:dist-snr}
    \end{minipage}
\end{figure}

In Fig.~\ref{fig:dist-snr}, we plot both the analytical bounds and the pointwise MSE curves for different transmit signals $\mathbf{s}=o\cdot\mathbf{I}_D$, where $o\in \{ 0, \frac{1}{8}, \frac{1}{5}, \frac{1}{4}, \frac{1}{3}\}$. 
It can be observed that the MSE decreases with increasing $\frac{P}{N_0}$, i.e., decreasing effective noise variance $\sigma_{\mathrm{eff}}^2=\frac{N_0}{P}$, for all transmitted signals. 
The signal closer to $\mathbf{0}$ consistently achieve lower distortion compared to those near the boundary. This observation is consistent with the theoretical insights.
Moreover, the analytical bounds tightly characterize the simulation results for all $\frac{P}{N_0}$, validating the accuracy of the theoretical analysis.
As $\frac{P}{N_0}$ increases, the performance gap between different signals gradually diminishes, and all curves converge toward the same regime, indicating that the effect of wrapping becomes negligible in the high-SNR region.

%%%%%%%%%%-----------------------------------

%%%%%%%%%%-----------------------------------

\begin{figure}
    \centering
    \begin{minipage}[b]{0.5\textwidth}
        \centering
        \includegraphics[width=\linewidth]{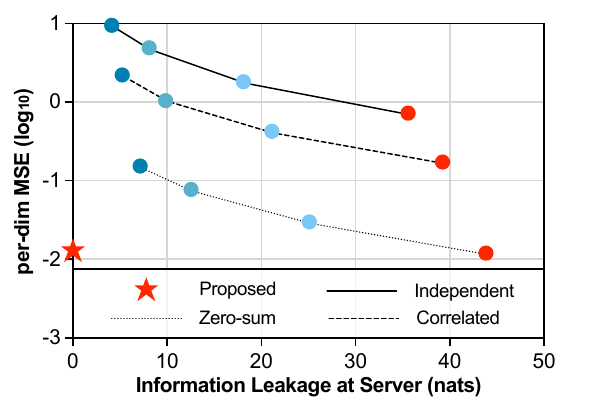}
        \vspace{-8mm}
        \caption{MSE versus mutual information leakage of the proposed P$^2$-AirComp compared with different methods under a transmit power constraint. Identical symbol colors indicate equal privacy noise power.  }
        \vspace{-5.5mm}
        \label{fig:dist-mi}
    \end{minipage}
\end{figure}

%%%%%%%%%%-----------------------------------
%\vspace{-0.5em}
\subsection{Privacy-Utility Comparison}
In Fig.~\ref{fig:dist-mi}, we plot the per-dimension MSE versus the information leakage at the server. The information leakage is quantified by 
$I( \{\mathbf{W}_k\}_{k\in[K]}; \left\{\mathbf{x}_k\right\}_{k\in[K]} \vert \mathbf{W} )$. For the zero-sum noise scheme, the transmitted signals are first projected onto a $K-1$-dimensional subspace, after which the mutual information is evaluated.
Assume $\mathbf{W}_k\sim\mathcal{N}(0,0.01\mathbf{I}_D)$, then $\mathbf{W}\sim\mathcal{N}(0,0.1\mathbf{I}_D)$. 
As shown in Fig. \ref{fig:dist-mi}, P$^2$-AirComp attains perfect privacy and per-dim MSE of $10^{-1.8839}$ under the power constraint. In contrast, none of the benchmark methods achieves zero information leakage.
Although zero-sum noise can achieve comparable MSE performance, it significantly sacrifices mutual information privacy.
Under the transmit power constraint, a clear tradeoff between MSE and privacy can be observed for all three benchmark schemes: reducing information leakage leads to increased distortion.
Although zero-sum noise is designed to cancel out, a trade-off arises between the privacy noise and the common power scaling factor. Increasing the privacy noise requires a reduction in the common power scaling factor under the power constraint, which in turn weakens the system’s ability to combat channel noise and degrades the computation accuracy.  
Similar effects arise in independent and correlated noise schemes, but in addition, the residual privacy noise increases with stronger privacy, inherently degrading estimation accuracy.
Under the same noise variance, zero-sum noise yields higher information leakage and lower MSE than correlated noise, while correlated noise results in higher information leakage and lower MSE than independent noise. This is because a stronger correlation cancels more privacy noise, thereby improving estimation accuracy, but reducing uncertainty in protecting private messages. 

These results demonstrate the effectiveness of the proposed cross-domain design, which leverages the modulo operation to preserve privacy while maintaining small distortion.

%%%%%%%%%%------------------------------------
%\vspace{-0.5em}
\section{Concluding Remarks}
This paper proposes P$^2$-AirComp, a protocol that achieves perfect privacy and accurate over-the-air computation within a finite power budget. By combining real-field and modulo operations, the proposed method can potentially overcome the privacy-distortion trade-off under a finite power constraint, a limitation that existing methods are unable to resolve. Rigorous proofs establishing perfect information-theoretic privacy with correlated secret keys are provided. Moreover, the sum-dependent distortion is derived in closed form, along with explicit upper and lower bounds. Analytical and numerical results demonstrate that P$^2$-AirComp achieves zero information leakage while maintaining low MSE, thereby outperforming existing methods.

%%%%%%%%%%%%----------------------------
\balance
\bibliographystyle{IEEEtran.bst}
\bibliography{IEEEabrv,reference}

%%%%%%%%%%------------------------------------
\begin{appendices}
\section{Proof of Markov Chain 1} \label{appx: Markov_proof}
%Consider the first $K-1$ secret keys $\{\mathbf{S}_k\}_{k=1}^{K-1}$, they are i.i.d. uniformly distributed within the interval $[0, 1)$, a
As any $K-1$ rows in $\mathbf{G}_S$ are of full rank $K-1$, the first $K-1$ generated messages $\{\mathbf{e}_k\}_{k=1}^{K-1}$ are also i.i.d. uniformly distributed within the interval $[0, 1)$, and the messages $\{\mathbf{e}_k\}_{k=1}^{K-1}$ are mutually independent from the private message $\{\mathbf{W}_k\}_{k=1}^{K}$.  
Following \eqref{eq:modulo_lero_sum_con},
%\begin{align}
%    \mathbf{S}_k=\left[-\sum_{k=1}^{K-1} \mathbf{S}_k \right]\hspace{-2mm} %\mod 1,
%\end{align}
%i.e., 
the last secret key $\mathbf{S}_k$ is fully determined by $\{\mathbf{S}_k\}_{k=1}^{K-1}$, hence,  
\begin{align}
    \mathbf{e}_K
    =\left[ \mathbf{W}-\sum_{k=1}^{K-1} \mathbf{e}_k \right]\hspace{-2mm} \mod 1.
    \label{eq: eK}
\end{align}
Then, it can be derived that 
\begin{align}
    &p\left(\left\{\mathbf{e}_k\right\}_{k\in[K]}\middle| \{\mathbf{W}_k\}_{k\in[K]}\right) \notag\\
    =&p\left(\left\{\mathbf{e}_k\right\}_{k=1}^{K-1}\middle| \{\mathbf{W}_k\}_{k\in[K]}\right) \notag\\
    &\hspace{2cm}\cdot p\left( \mathbf{e}_K \middle| \{\mathbf{W}_k\}_{k\in[K]},\left\{\mathbf{e}_k\right\}_{k=1}^{K-1}\right)\\
    =&p\left(\left\{\mathbf{e}_k\right\}_{k=1}^{K-1}\right) 
    p\left( \mathbf{e}_K \middle| \mathbf{W}, \{\mathbf{W}_k\}_{k\in[K]},\left\{\mathbf{e}_k\right\}_{k=1}^{K-1}\right)\\
    =&p\left(\left\{\mathbf{e}_k\right\}_{k=1}^{K-1}\right) 
    p\left( \mathbf{e}_K \middle| \mathbf{W},\left\{\mathbf{e}_k\right\}_{k=1}^{K-1}\right) \label{eq: markov_left_-1}\\
    =&p\left(\left\{\mathbf{e}_k\right\}_{k\in[K]} \middle| \mathbf{W}\right),
    \label{eq: markov_left}
\end{align}
where \eqref{eq: markov_left_-1} owes to \eqref{eq: eK}. Moreover, 
\begin{align}
    &p\left(\left\{\mathbf{e}_k\right\}_{k\in[K]}\middle| \{\mathbf{W}_k\}_{k\in[K]}\right) \notag\\
    =&p\left(\left\{\mathbf{e}_k\right\}_{k\in[K]}\middle| \mathbf{W}, \{\mathbf{W}_k\}_{k\in[K]}\right).
    \label{eq: markov_right}
\end{align}
Combine \eqref{eq: markov_left} and \eqref{eq: markov_right}, we get
\begin{align}
    p\left(\left\{\mathbf{e}_k\right\}_{k\in[K]} \middle| \mathbf{W}\right)
    \hspace{-1mm}
    =p\left(\left\{\mathbf{e}_k\right\}_{k\in[K]}\middle| \mathbf{W}, \{\mathbf{W}_k\}_{k\in[K]}\right),
\end{align}
which proves $\{\mathbf{W}_k\}_{k\in[K]} \circbar \mathbf{W}\circbar \left\{\mathbf{e}_k\right\}_{k\in[K]}$.

%%%%%%%%%%%%------------------------------------
\section{Proof of Markov Chain 2} \label{appx: Markov_proof2}
Following previous analysis in Appendix \ref{appx: Markov_proof}, since each client $k$ holds $\mathbf{S}_k$ and $\mathbf{W}_{k}$, any $K-2$ received messages are mutually independent. 
For $k_0\neq k$, it holds that
\begin{align}
    \mathbf{e}_{k_0}=\left[ \mathbf{W}^{-k}- \sum_{k'\neq k,k_0} \mathbf{e}_{k'} \right] \hspace{-2mm} \mod 1. 
\end{align}
Hence, 
\begin{align}
    &p\left(\left\{\mathbf{e}_{k'}\right\}_{k'\neq k}\middle| \{\mathbf{W}_{k'}\}_{k'\in[K]}, \mathbf{S}_k, \mathbf{W}^{-k}
    \right) \notag\\
    =&p\left(\left\{\mathbf{e}_{k'}\right\}_{k'\neq k,k_0}\middle| \{\mathbf{W}_{k'}\}_{k'\in[K]}, \mathbf{S}_k, \mathbf{W}^{-k}
    \right) \notag\\ 
    &\hspace{0mm}\cdot p\left(\mathbf{e}_{k_0}\middle| \{\mathbf{W}_{k'}\}_{k'\in[K]}, \mathbf{S}_k, \mathbf{W}^{-k},\left\{\mathbf{e}_{k'}\right\}_{k'\neq k,k_0}\right) \\
    =&p\left(\left\{\mathbf{e}_{k'}\right\}_{k'\neq k,k_0}\right) \notag\\
    &\hspace{10mm}\cdot p\left(\mathbf{e}_{k_0}\middle| \mathbf{W}_{k}, \mathbf{S}_k, \mathbf{W}^{-k},\left\{\mathbf{e}_{k'}\right\}_{k'\neq k,k_0}\right) \\
    =&p\left(\left\{\mathbf{e}_{k'}\right\}_{k'\neq k}\middle| \mathbf{W}_{k}, \mathbf{S}_k, \mathbf{W}^{-k}\right),
\end{align}
which proves $\{\mathbf{W}_{k'}\}_{k'\neq k} \circbar (\mathbf{W}_{k}, \mathbf{S}_k, \mathbf{W}^{-k}) \circbar \left\{\mathbf{e}_{k'}\right\}_{k'\neq k}$.

%\begin{align}
%    &p\left(\left\{\mathbf{e}_{k'}\right\}_{k'\neq k}\middle| %\mathbf{W}_{k}, \mathbf{S}_k, %\mathbf{W}^{-k}
%    \right) \notag\\
%    =&p\left( \left\{\mathbf{e}_{k'}\right\}_{k'\neq k} \middle| \%{\mathbf{W}_{k'}\}_{k'\in[K]}, \mathbf{S}_k, %\mathbf{W}^{-k}\right)
%\end{align}

\end{appendices}

\end{document}